\newtheorem{prop}{Proposition}
\begin{document}

\title{\LARGE{Channel Estimation and Multipath Diversity Reception for RIS-Empowered Broadband Wireless Systems Based on Cyclic-Prefixed Single-Carrier Transmission}}
\author{Qiang~Li, \IEEEmembership{Member,~IEEE}, Miaowen~Wen, \IEEEmembership{Senior Member, IEEE}, Ertugrul Basar, \IEEEmembership{Senior Member, IEEE}, George C. Alexandropoulos, \IEEEmembership{Senior Member, IEEE}, Kyeong Jin Kim, \IEEEmembership{Senior Member, IEEE}, and H. Vincent Poor, \IEEEmembership {Life Fellow, IEEE}
	\thanks{Qiang~Li is with the College of Information Science and Technology, Jinan University, Guangzhou 510632, China (e-mail: qiangli@jnu.edu.cn).}
	\thanks{Miaowen~Wen is with the School of Electronic and Information Engineering, South China University of Technology, Guangzhou 510640, China (e-mail: eemwwen@scut.edu.cn).}
	\thanks{Ertugrul Basar is with the Department of Electrical and Electronics Engineering, Ko\c{c} University, Istanbul 34450, Turkey (e-mail: ebasar@ku.edu.tr).}
	\thanks{George C. Alexandropoulos is with the National and Kapodistrian University of Athens, Panepistimiopolis Ilissia, Greece  (e-mail:  alexandg@di.uoa.gr).}
	\thanks{Kyeong Jin Kim is with the Mitsubishi Electric Research Laboratories, Cambridge, MA 02139, USA (e-mail: kkim@merl.com).}
	\thanks{H. Vincent Poor is with the Department of Electrical and Computer Engineering, Princeton University, Princeton, USA (e-mail: poor@princeton.edu).}
}
\maketitle

\begin{abstract}
In this paper, a cyclic-prefixed single-carrier (CPSC) transmission scheme with phase shift keying (PSK) signaling is presented for broadband wireless communications systems empowered by a reconfigurable intelligent surface (RIS). In the proposed CPSC-RIS, the RIS is configured according to the transmitted PSK symbols such that different cyclically delayed versions of the incident signal are created by the RIS to achieve multipath diversity. A practical and efficient channel estimator is developed for CPSC-RIS and the mean square error of the channel estimation is expressed in closed-form. We analyze the bit error rate (BER) performance of CPSC-RIS over frequency-selective Nakagami-$m$ fading channels. An upper bound on the BER is derived by assuming the maximum-likelihood detection. Furthermore, by resorting to the concept of index modulation (IM), we propose an extension of CPSC-RIS, termed CPSC-RIS-IM, which enhances the spectral efficiency. In addition to conventional constellation information of PSK symbols, CPSC-RIS-IM uses the full permutations of cyclic delays caused by the RIS to carry information. A sub-optimal receiver is designed for CPSC-RIS-IM to aim at low computational complexity. Our simulation results in terms of BER corroborate the performance analysis and the superiority of CPSC-RIS(-IM) over the conventional CPSC without an RIS and orthogonal frequency division multiplexing with an RIS.
\end{abstract}

\begin{keywords}
	\centering
Channel estimation, cyclic delay diversity, cyclic-prefixed single-carrier, index modulation, reconfigurable intelligence surface. 
\end{keywords}

\IEEEpeerreviewmaketitle

\section{Introduction}
As an alternative to orthogonal frequency division multiplexing (OFDM), cyclic-prefixed single-carrier (CPSC) shares some OFDM advantages, such as low-complexity implementation, while avoiding several OFDM drawbacks, such as a high peak-to-average power ratio and high sensitivity to carrier frequency offsets \cite{PancaldiSingle,FalconerFrequency}. Moreover, uncoded CPSC with a minimum mean square error (MMSE) receiver is able to extract multipath diversity in frequency-selective fading channels for reasonably large values of block sizes and practical values of bit error rate (BER) \cite{DevillersAbout}, and multipath diversity in identical \cite{KimPerformance1} and non-identical frequency selective fading channels \cite{KimDiversity} can be achieved by a maximum-likelihood (ML) receiver irrespective of the uncoded block size. By contrast, without channel coding or precoding techniques, OFDM cannot harvest any multipath diversity even with the optimal ML receiver \cite{DevillersAbout}. CPSC is a promising solution to broadband wireless communications, and has been included in the 3GPP Long Term Evolution standard for uplink transmission \cite{PancaldiSingle}. 

Rich multipath components, however, are required in the propagation environment for CPSC to achieve a significant gain in terms of BER. Unfortunately, this requirement may not be satisfied in general. Cyclic delay diversity (CDD) \cite{DammannStandard} is a simple yet effective technique to solve this problem. By transmitting the same signal from multiple antennas equipped in a single transmitter with different cyclic delays, a multiple-input channel is equivalently transformed to a single-input one with increasing multipath diversity and without altering the receiver structure. For CDD-CPSC, the zero-forcing (ZF) receiver fails to obtain any diversity gain, while the ML and MMSE receivers are able to pick up diversity gains under some conditions, as pure CPSC without CDD. To maximize the performance gain of applying CDD to CPSC systems, non-linear equalizers, such as the frequency-domain Turbo equalizer \cite{KwonCyclic} and the block iterative generalized decision feedback equalizer \cite{LiangDesign}, were developed. In cooperative CPSC systems, distributed CDD can be implemented via multiple single-antenna transmitters \cite{KimPerformance}. The idea of distributed CDD was further applied to physical-layer security enhancement \cite{KimSecrecy} and spectrum sharing systems \cite{IradukundaOn}. It is worth noting that all of the above-mentioned CDD transmission schemes are based on multiple local or distributed transmit antennas, each equipped with a radio frequency (RF) chain. This configuration obviously complicates the system implementation, expands the hardware cost, and increases the power consumption, which may be unfavorable for energy-constrained and size-limited devices.

Recently, reconfigurable intelligent surface (RIS)-assisted transmission has emerged as an easy-to-implement, low-cost, and green communication technique \cite{BasarWireless,RenzoSmart,JianReconfigurable}. On an RIS, there are a large number of passive reflecting elements, each of which is able to reflect and exert adjustable amplitude-phase changes on incident signals. In this sense, the amplitude-phase responses of RISs and associated channels can be reconfigured intentionally to achieve different purposes \cite{StrinatiReconfigurable}. By exploiting this property, RISs have been applied to enhance the wireless communications in terms of energy efficiency \cite{HuangReconfigurable,WuIntelligent}, weighted sum-rate \cite{GuoWeighted}, secrecy rate \cite{HongRobust,DuReconfigurable,AlexandropoulosSafeguarding}, network coverage \cite{YildirimHybrid}, error performance \cite{YeJoint}, localization accuracy \cite{AbuNear}, etc. In particular, an RIS can be considered as an RF chain-free multi-antenna device. Inspired by this observation, the authors in \cite{TangMIMO} and \cite{TangRealization} designed RIS-based multiple-input multiple-output transmission schemes for spatial multiplexing and Alamouti space–time block coding, respectively. On the other hand, the idea of index modulation (IM) \cite{BasarIndex}, which uses the indices of some resources/building blocks (e.g., antennas, subcarriers, time slots, and signal constellations) to convey information, has been introduced into RIS-aided communications. Obviously, RISs can be explicitly deployed to enhance the existing IM schemes \cite{CanbilenReconfigurable,LiSpace,LuoSpatial}. Moreover, new RIS-based IM schemes that use the indices of reflection patterns to carry information were developed in \cite{LiSingle,GuoReflecting,YanPassive,BasarReconfigurable,DashPerformance,YuanReceive,GopiIntelligent,LinReconfigurable,LinReconfigurable2}. In contrast to the above-mentioned RIS-related works that focus on narrowband wireless communications over flat fading channels, the authors in \cite{YangIntelligent,ZhengIntelligent,LinAdaptive} studied RIS-aided OFDM broadband wireless communications, in which the channel state information (CSI) is estimated in the frequency domain without multipath diversity gains.

To the best of our knowledge, CPSC transmission techniques have not been investigated for RIS-aided broadband wireless communications. Moreover, there have not been any reports of time-domain channel estimation methods for RIS-aided broadband frequency-selective channels in the literature. The fact that an RIS acts as a multi-element reflector has also not been exploited to implement CDD as well. Since OFDM and CPSC have different pros and cons, CPSC-RIS can be considered as an alternative to OFDM-RIS. They can coexist and complement each other for broadband communications. Against this background, we design a CPSC transmission scheme for RIS-empowered broadband wireless systems in this paper. However, the CP occupies a non-negligible bandwidth and leads to a reduction of the spectral efficiency (SE), which is a notable problem especially for short-packet communications. To solve this problem, we further develop an extension of CPSC-RIS by resorting to the concept of IM. Specifically, the main contributions of this paper are summarized as follows:
\begin{itemize}
	\item A CPSC transmission scheme with CDD and phase shift keying (PSK) signaling is proposed for RIS-empowered broadband wireless systems to harvest multipath diversity gains. An efficient pilot-aided channel estimator, which is able to estimate the equivalent channel in the time domain via one transmission block, is developed for CPSC-RIS. In this sense, CPSC-RIS provides a new framework of channel estimation for RIS-aided broadband communications systems.

	\item The proposed CPSC-RIS scheme provide a flexible design. By adjusting the block size, CPSC-RIS can adapt to different degrees of variability in the CSI. In particular, CPSC-RIS with a small block size can be dedicated to short-packet communications at the cost of some diversity gains. On the other hand, compared with conventional antenna-based CDD-CPSC, CPSC-RIS uses the RIS to achieve CDD, avoiding multiple complicated and power-hungry RF chains.
	
	\item Both ML and ZF/MMSE detectors are designed for CPSC-RIS. We analyze the BER performance of CPSC-RIS over frequency-selective Nakagami-$m$ fading channels. An upper bound on the BER of CPSC-RIS is derived in closed-form by assuming the ML detection without channel estimation errors. From the BER analysis, the diversity order and the performance improvement of CPSC-RIS over CPSC is characterized.
	
	\item An extension of CPSC-RIS, termed CPSC-RIS-IM, is also proposed to improve the SE by resorting to the concept of IM. To be more specific, CPSC-RIS-IM encodes the cyclic delays caused by the RIS for conveying additional information. A low-complexity detector is then designed for CPSC-RIS-IM.
\end{itemize}

The remainder of this paper is organized as follows. Section~II describes the system model of CPSC-RIS, including the designs of channel estimator and signal detectors. The performance of CPSC-RIS with the ML detection is analyzed in Section~III, followed by one IM-empowered extension of CPSC-RIS in Section~IV. Section V presents the computer simulation results, and finally, Section~VI concludes the paper.

\textit{Notation:} Column vectors and matrices are denoted by lowercase and uppercase boldface letters, respectively. Superscripts $(\cdot)^*$, $(\cdot)^T$, and $(\cdot)^H$ stand for conjugate, transpose, and Hermitian transpose, respectively. The $ N \times N $ identity matrix is symbolized by $ \mathbf{I}_{N \times N} $. $\mathbf{1}_{1 \times N}$ and $\mathbf{0}_{1 \times N}$ denote an all-one matrix and an all-zero matrix of size $1 \times N$, respectively. $\mathrm{diag}\{\cdot\}$ transforms a vector into a diagonal matrix. $\Re\{\cdot\}$ and $\Im\{\cdot\}$ return the real and imaginary parts of a complex number, respectively. $j=\sqrt{-1}$ is the imaginary unit. $(\mathcal{C})\mathcal{N}(\mu,\sigma^2)$ represents the (complex) Gaussian distribution with mean $\mu$ and variance $\sigma^2$. The probability of an event and the probability density function (PDF) of a random variable are denoted by $\Pr(\cdot)$ and $p(\cdot)$, respectively. $E\{\cdot\}$ and $Var\{\cdot\}$ denote expectation and variance, respectively. $\|{\cdot}\|$ stands for the Frobenius norm. $Q(\cdot)$, $\Gamma(\cdot)$, and $\lfloor \cdot \rfloor$ represent the Gaussian $Q$-function, Gamma function, and floor function, respectively. $\mathrm{rank}(\cdot)$ and $\mathrm{Tr}(\cdot)$ denote the rank and trace of a matrix, respectively. $\odot$ and $\star$ represent the Hadamard product and linear convolution, respectively. $\angle(\cdot)$ denotes the phase of a complex number and $\mathrm{cir}(\cdot)$ denotes right circulant operation.

\section{System Model}
\begin{figure}[t]
	\centering
	\includegraphics[width=5.5in]{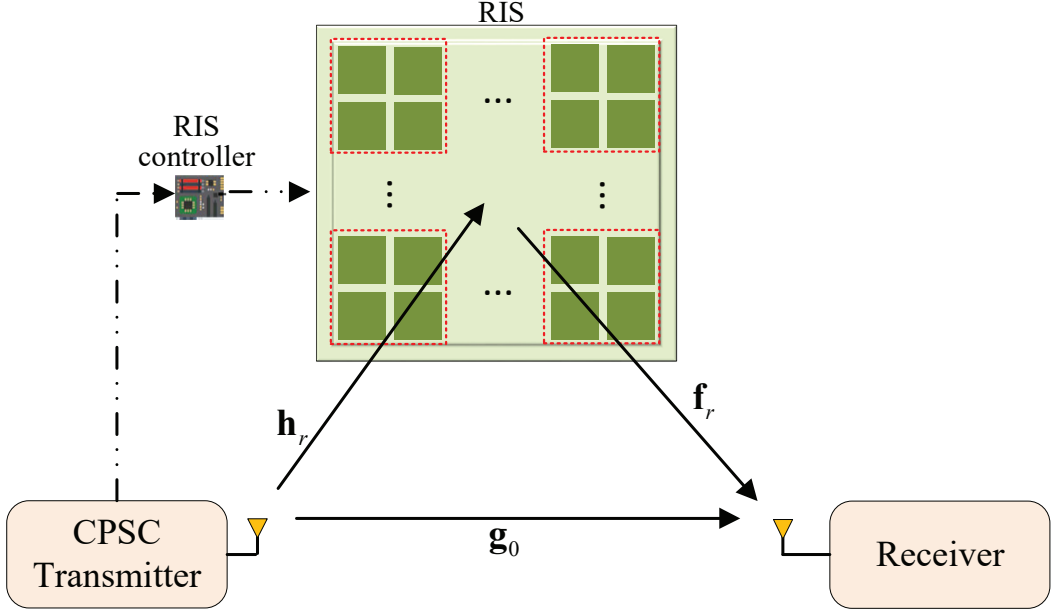}
	\caption{The block diagram of the proposed CPSC-RIS system comprising a single-antenna transmitter, a multi-element RIS, and a single-antenna receiver.}
	\label{Fig_viso}
\end{figure}
The block diagram of the proposed CPSC-RIS system is illustrated in Fig.~\ref{Fig_viso}, where a CPSC transmitter communicates with a receiver over frequency-selective block fading channels with the aid of an RIS. Both the transmitter and the receiver are equipped with a single antenna, while the RIS that is connected to a controller consists of $N_R$ reflecting elements. The RIS controller follows the instructions from the transmitter to adjust the reflection coefficients of the RIS \cite{StrinatiReconfigurable}. However, $N_R$ is typically a large number, which results in high overhead/complexity for channel estimation and coefficient adjustment. Hence, we adopt the grouping method \cite{YangIntelligent,AlexandropoulosPhase}, i.e., the total of $N_R$ reflecting elements are partitioned into $R$ reflecting groups, each of which consists of $N_G=N_R/R$ adjacent elements sharing a common reflection coefficient.\footnote{In practice, the grouping should be implemented according to the statistical characteristics of channels. The value of $N_G$ can be obtained by trial and error for achieving a good trade-off between complexity, SE, and error performance. Also, the number of reflecting elements in each group is not necessary to be equal. Although the grouping that merges $N_G$ independent channels into a combined channel results in a loss of some diversity gains, the numbers of reflection coefficients and channel links that are required to be adjusted and estimated respectively are reduced from $N_R$ to $R$, which significantly lowers the complexity of RIS configuration and channel estimation.} Accordingly, the reflection coefficient for the $r$-th group of the RIS is expressed as $\phi_r=a_r\exp(j\theta_r)$, where $a_r$ is the amplitude coefficient and $\theta_r$ is the phase shift with $r=1,\ldots,R$. For all $r$, we set $a_r$ to unity and take $\theta_r \in [0,2\pi)$ from a discrete set $\Theta$.

The direct channel link from the transmitter to the receiver is described by the channel impulse response $\mathbf{g}_0=[g_0(1),\ldots,g_0(L_0)]^T$, where $L_0$ is the number of channel taps for $\mathbf{g}_0$. The combined channels from the transmitter to the $r$-th reflecting group and from the $r$-th reflecting group to the receiver are denoted by $\mathbf{h}_r=[h_r(1),\ldots,h_r(L_{r0})]^T$ with $L_{r0}$ channel taps and $\mathbf{f}_r=[f_r(1),\ldots,f_r(L_{r1})]^T$ with $L_{r1}$ channel taps, respectively. The equivalent channel of the cascaded transmitter-RIS-receiver link associated with the $r$-th reflecting group is given by $\mathbf{g}_r=[g_r(1),\ldots,g_r(L_r)]^T=\mathbf{h}_r \star \mathbf{f}_r$ with $L_r = L_{r0}+L_{r1}-1$. In this paper, the amplitude of $g_{\bar{r}}(l_{\bar{r}})$ is modeled as a Nakagami-$m$ distribution with fading parameter $m_{\bar{r}}(l_{\bar{r}})\in [1,2,\ldots,\infty]$ and spreading parameter $\Omega_{\bar{r}}(l_{\bar{r}}) >0$, i.e.,
\begin{align}
	p_{|g_{\bar{r}}(l_{\bar{r}})|}(x) = \frac{2m_{\bar{r}}(l_{\bar{r}})^{m_{\bar{r}}(l_{\bar{r}})}x^{2m_{\bar{r}}(l_{\bar{r}})-1}}{\Omega_{\bar{r}}(l_{\bar{r}})^{m_{\bar{r}}(l_{\bar{r}})}\Gamma(m_{\bar{r}}(l_{\bar{r}}))}\exp\left(-\frac{m_{\bar{r}}(l_{\bar{r}})x^2}{\Omega_{\bar{r}}(l_{\bar{r}})}\right),\quad x>0
\end{align}
where $\bar{r}=0,1,\ldots,R$ and $l_{\bar{r}}=1,\ldots,L_{\bar{r}}$. Note that the phase of $g_{\bar{r}}(l_{\bar{r}})$ is not uniformly distributed \cite{DashCoherent,MallikA}; its PDF is given in \cite{MallikA} and assumed to be the same for all $\bar{r}$ and $l_{\bar{r}}$. Here, the channels associated with different reflecting groups are approximately considered as independent and identically distributed fading. We note that the practical case of correlated Nakagami-$m$ fading \cite{AlexandropoulosSwitch} will be considered in future work.

\subsection{Transmitter Design}
For each transmission, the CPSC transmitter modulates $b = N\log_2(M)$ information bits into a symbol vector $\mathbf{x}=[x(1),\ldots,x(N)]^T$, where $x(n)$ is a normalized $M$-ary PSK symbol drawn from the constellation $\mathcal{X}$ for $n=1,\ldots,N$. After adding an $L$-length CP to the beginning of $\mathbf{x}$, the transmitted data vector can be given by
\begin{align}
	\mathbf{x}_{CP} = [x(N-L+1), \ldots, x(N), x(1),\ldots,x(N)]^T,
\end{align}
where $L\geq \max\{L_0,\ldots,L_R\}$.\footnote{The values of $L_0,\ldots,L_R$ are assumed to be known by using channel sounding schemes \cite{KimPerformance}.} At this point, the transmitter derives the values of $\theta_r$ for all $r$ based on $\mathbf{x}_{CP}$ and convey them to the RIS controller that further adjusts the $R$ reflecting groups. It is worth noting that since $\mathbf{x}_{CP}$ is a PSK-modulated vector, cyclically delayed versions of $\mathbf{x}_{CP}$ can be constructed by performing phase shifts on $\mathbf{x}_{CP}$. To achieve CDD, $\theta_r$ is required to vary along with the transmission of $\mathbf{x}_{CP}$ and thus denoted by the vector $\boldsymbol{\theta}_r$ in the following. Moreover, by applying $\boldsymbol{\theta}_r$, the signal reflected from the $r$-th group of the RIS is expected to be a cyclically delayed version of $\mathbf{x}$, appended with an additional $L$-length CP, i.e.,
\begin{align}
	\mathbf{x}_{CP}(\Delta_r) &= \Big[\underbrace{x(N-\Delta_r-L+1),\ldots,x(N-\Delta_r)}_{\text{CP}}, \nonumber \\
	&\hspace{+0.7cm}\underbrace{x(N-\Delta_r+1),\ldots,x(N),x(1),\ldots,x(N-\Delta_r)}_{\mathbf{x}(\Delta_r)}\Big]^T,
\end{align}
where $\mathbf{x}(\Delta_r)$ is the cyclically delayed version of $\mathbf{x}$ with the cyclic delay $\Delta_r$ for the $r$-th reflecting group. In CPSC-RIS, we let $\Delta_r=r\Delta$, where $L \leq \Delta \leq \lfloor{N/(R+1)}\rfloor$. Here, the constraint $\Delta \geq L$ is required to avoid inter-symbol interference (ISI) caused by multipath channels, while the constraint $\Delta \leq \lfloor{N/(R+1)}\rfloor$ is required to avoid ISI caused by simultaneous CPSC transmissions from different reflecting groups considering the block length. From $\mathbf{x}_{CP} \odot \exp(j\boldsymbol{\theta}_r)=\mathbf{x}_{CP}(\Delta_r)$, we have
\begin{align}\label{theta}
	\boldsymbol{\theta}_r = \Big[\angle x(N-\Delta_r-L+1)-\angle x(N-L+1),\ldots,\angle x(N-\Delta_r)-\angle x(N), \nonumber \\
	\angle x(N-\Delta_r+1)- \angle x(1),\ldots,\angle x(N-\Delta_r) - \angle x(N)\Big]^T,\quad r=1,\ldots,R
\end{align}
and
\begin{align}
	\Theta = \left\{0,\frac{2\pi}{M},\cdots,\frac{2\pi(M-1)}{M}\right\}.
\end{align}

\begin{figure}[t]
	\centering
	\includegraphics[width=4.5in]{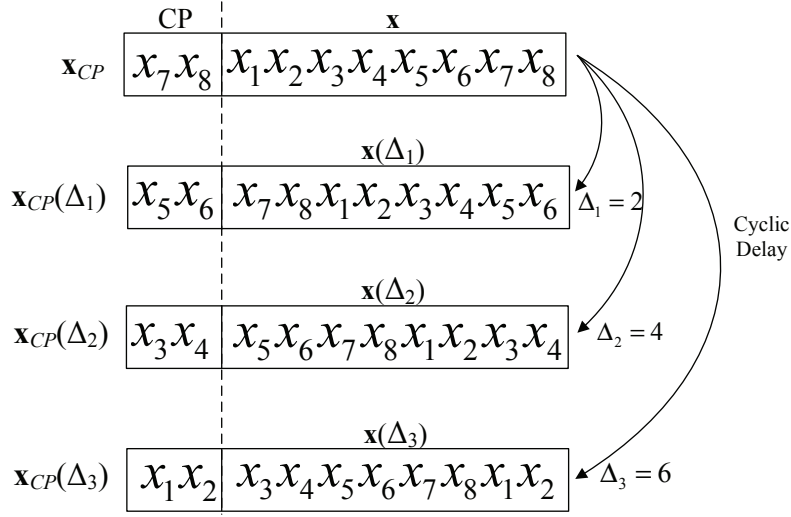}
	\caption{An example of $\mathbf{x}_{CP} $ and $\mathbf{x}_{CP}(\Delta_r),r= 1,\ldots,R$ for CPSC-RIS, where $N=8,R=3$, and $\Delta=2$. To save space, $x(n)$ is denoted by $x_n$ in the figure and later in Fig.~\ref{CDD_IM}.}
	\label{Fig_x}
\end{figure}
In Fig.~\ref{Fig_x}, we present an illustrative example of $\mathbf{x}_{CP} $ and $\mathbf{x}_{CP}(\Delta_r),r= 1,\ldots,R$, where $N=8,R=3,\Delta=2$, and $\mathbf{x}_{CP}(\Delta_r),r= 1,2,3$ are derived from $\mathbf{x}_{CP} $ via phase shifts.

At the receiver, after discarding the CP, the channel output is given by
\begin{align}\label{y_1}
	\mathbf{y} = \mathbf{G}_0 \mathbf{x} + \sum_{r=1}^{R}\mathbf{G}_r \mathbf{x}(\Delta_r) + \mathbf{w}= \sum_{\bar{r}=0}^{R}\mathbf{G}_{\bar{r}} \mathbf{x}(\Delta_{\bar{r}}) + \mathbf{w},
\end{align}
where $\Delta_0=0$, $\mathbf{w} \in \mathbb{C}^{N\times1}$ is the noise vector with the distribution $\mathcal{CN}(\mathbf{0},N_0\mathbf{I}_{N \times N})$, and $\mathbf{G}_{\bar{r}}$ is an $N \times N $ circulant channel matrix given by $\mathbf{G}_{\bar{r}}= \mathrm{cir}(\mathbf{g}_{\bar{r}}^0)$ with $\mathbf{g}_{\bar{r}}^0=[\mathbf{g}_{\bar{r}}^T,\mathbf{0}_{1 \times (N-L_{\bar{r}})}]^T \in \mathbb{C}^{N \times 1}$. The signal-to-noise ratio (SNR) in this paper is defined as $E_b/N_0$ with $E_b = (N + L)/b$ being the average transmitted energy per bit. We can rewrite (\ref{y_1}) as
\begin{align}\label{y_2}
	\mathbf{y} = \sum_{{\bar{r}}=0}^{R}\mathbf{X} \mathbf{g}_{\bar{r}}^0(\Delta_{\bar{r}}) + \mathbf{w}= \mathbf{X}\mathbf{g}_{eq} + \mathbf{w},
\end{align}
where $\mathbf{g}_{\bar{r}}^0(\Delta_{\bar{r}})$ is the cyclicly delayed version of $\mathbf{g}_{\bar{r}}^0$ with the cyclic delay $\Delta_{\bar{r}}$, $\mathbf{X}$ is an $N \times N $ circulant matrix represented by $\mathbf{X}=\mathrm{cir}(\mathbf{x})$,
and $\mathbf{g}_{eq}$ is given by
\begin{align}\label{g_eq}
	\mathbf{g}_{eq}=\sum_{{\bar{r}}=0}^{R}\mathbf{g}_{\bar{r}}^0(\Delta_{\bar{r}})=[\mathbf{g}_0^T,\mathbf{0}_{1 \times (\Delta-L_0)},\mathbf{g}_1^T,\mathbf{0}_{1 \times (\Delta-L_1)},\ldots,\mathbf{g}_R^T,\mathbf{0}_{1 \times (N-R\Delta -L_R)}]^T.
\end{align}
From (\ref{g_eq}), provided that the cyclic delay constraints are satisfied, increasing the value of $R$ enhances the channel impulse response (CIR). We also define $\mathbf{g}_{eq}'=[g_{eq}'(1),\ldots,g_{eq}'(L_s)]^T=[\mathbf{g}_0^T,\mathbf{g}_1^T,\ldots,\mathbf{g}_R^T]^T$, which consists of $L_s=\sum_{{\bar{r}}=0}^{R}L_{\bar{r}}$ non-zero entries of $\mathbf{g}_{eq}$. As seen from (\ref{y_1}) and (\ref{y_2}), the cyclic delays on $\mathbf{x}$ are converted into those on channels. By denoting $\mathbf{G}_{eq}=\mathrm{cir}(\mathbf{g}_{eq})$, (\ref{y_2}) can be further expressed as
\begin{align}\label{y_3}
	\mathbf{y} = \mathbf{G}_{eq}\mathbf{x} + \mathbf{w}.
\end{align}
It can be observed from (\ref{y_3}) that CPSC-RIS can be regarded as a conventional CPSC scheme with the enhanced CIR in (\ref{g_eq}). Employing CDD does not change the characteristics of the transmitted signal. Obviously, the SE of CPSC-RIS is given by
\begin{align}\label{SE1}
	\textsf{F}_{\text{CPSC-RIS}} = \frac{b}{N+L} =\frac{N\log_2(M)}{N+L} \quad \text{bps/Hz}.
\end{align}

\subsection{Receiver Design}
In this subsection, we successively design a pilot-aided channel estimator and various detectors for CPSC-RIS.

\subsubsection{Channel Estimator}
For channel estimation, a pilot vector $\mathbf{x}_p \in \mathbb{C}^{N\times1}$, instead of a data vector $\mathbf{x}$, is transmitted from the CPSC transmitter, and the RIS is configured similarly to (\ref{theta}). Based on (\ref{y_2}), the least-square estimation of $\mathbf{g}_{eq}$ is given by
\begin{align}\label{g_eq_hat}
	\hat{\mathbf{g}}_{eq} = \mathbf{X}_p^{-1}\mathbf{y} = \mathbf{g}_{eq} + \mathbf{g}_{e},
\end{align}
where $\mathbf{X}_p = \mathrm{cir}(\mathbf{x}_p)$ and $\mathbf{g}_{e} = \mathbf{X}_p^{-1}\mathbf{w}$ represents the vector of channel estimation errors. $\mathbf{g}_{e}$ is independent of  $\mathbf{g}_{eq}$ and follows the Gaussian distribution $\mathcal{CN}(\mathbf{0},\sigma_e^2\mathbf{I}_{N \times N})$. The MSE of channel estimation can be derived as
\begin{align}\label{MSE_Channel}
	\epsilon &=E\{\left\| {\mathbf{g}_{e}} \right\|^2\}  \nonumber \\
	&=\mathrm{Tr}\left\{\mathbf{X}_p^{-1}E\{\mathbf{w}\mathbf{w}^H\}\left(\mathbf{X}_p^{-1}\right)^H\right\} \nonumber \\
	&=N_0\mathrm{Tr}\left\{\left( \mathbf{X}_p^{H}\mathbf{X}_p\right)^{-1}\right\}.
\end{align}

	\begin{prop}
		To minimize the MSE, $\mathbf{X}_p$ should satisfy $\mathbf{X}_p^{H}\mathbf{X}_p=N\mathbf{I}_{N \times N}$.	
	\end{prop}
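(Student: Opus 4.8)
The plan is to pass to the DFT domain, where the circulant structure of the pilot matrix makes everything diagonal, thereby reducing the MSE in (\ref{MSE_Channel}) to a sum of reciprocals of the squared DFT magnitudes of $\mathbf{x}_p$, and then to invoke a standard trace inequality under the fixed pilot-energy constraint. First I would use the fact that $\mathbf{X}_p=\mathrm{cir}(\mathbf{x}_p)$ is circulant, hence diagonalized by the unitary $N$-point DFT matrix $\mathbf{F}$ as $\mathbf{X}_p=\mathbf{F}^H\mathbf{D}\mathbf{F}$, where $\mathbf{D}=\mathrm{diag}\{d_1,\ldots,d_N\}$ collects the (suitably scaled) DFT coefficients of $\mathbf{x}_p$; note that invertibility of $\mathbf{X}_p$ in (\ref{g_eq_hat}) already forces $d_k\neq0$ for all $k$. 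Then $\mathbf{X}_p^H\mathbf{X}_p=\mathbf{F}^H\mathbf{D}^H\mathbf{D}\,\mathbf{F}$, so
\[
\epsilon=N_0\,\mathrm{Tr}\!\left\{(\mathbf{X}_p^H\mathbf{X}_p)^{-1}\right\}=N_0\sum_{k=1}^{N}\frac{1}{|d_k|^2}.
\]

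Next, I would impose the pilot-energy constraint. Since each column of the circulant matrix $\mathbf{X}_p$ is a cyclic shift of $\mathbf{x}_p$, we have $\mathrm{Tr}\{\mathbf{X}_p^H\mathbf{X}_p\}=\|\mathbf{X}_p\|^2=N\|\mathbf{x}_p\|^2$, a constant fixed by the pilot energy; with the natural per-block normalization $\|\mathbf{x}_p\|^2=N$ this is $\sum_{k=1}^N|d_k|^2=\mathrm{Tr}\{\mathbf{D}^H\mathbf{D}\}=N^2$. Applying the Cauchy--Schwarz inequality to the vectors $(|d_k|)_k$ and $(1/|d_k|)_k$ (equivalently the AM--HM inequality, i.e.\ the convexity of $t\mapsto1/t$) yields
\[
\Big(\sum_{k=1}^N|d_k|^2\Big)\Big(\sum_{k=1}^N\frac{1}{|d_k|^2}\Big)\ge N^2,
\]
so $\epsilon\ge N_0N^2/\big(\sum_k|d_k|^2\big)=N_0$, with equality if and only if $|d_1|^2=\cdots=|d_N|^2=N$. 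Translating back through $\mathbf{F}$, this equality condition is $\mathbf{D}^H\mathbf{D}=N\mathbf{I}_{N\times N}$, i.e.\ $\mathbf{X}_p^H\mathbf{X}_p=\mathbf{F}^H(N\mathbf{I}_{N\times N})\mathbf{F}=N\mathbf{I}_{N\times N}$, which is exactly the claimed condition. (Alternatively, one can simply cite the matrix inequality $\mathrm{Tr}(\mathbf{A})\,\mathrm{Tr}(\mathbf{A}^{-1})\ge N^2$, valid for any positive-definite $N\times N$ matrix $\mathbf{A}$ with equality iff $\mathbf{A}$ is a scalar multiple of $\mathbf{I}_{N\times N}$, applied to $\mathbf{A}=\mathbf{X}_p^H\mathbf{X}_p$ together with the trace constraint.)

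I do not anticipate a real obstacle: once the problem is set up, the statement is essentially the convexity of $1/t$. The only thing that needs care is the bookkeeping --- fixing the DFT normalization so that Parseval gives the right constant in the energy constraint, and stating precisely what is held fixed (the pilot energy, equivalently $\mathrm{Tr}\{\mathbf{X}_p^H\mathbf{X}_p\}$) in the minimization. To close the argument I would also remark that the optimal condition $\mathbf{X}_p^H\mathbf{X}_p=N\mathbf{I}_{N\times N}$ is compatible with a constant-modulus/PSK-type pilot --- any $\mathbf{x}_p$ with a flat DFT magnitude $|d_k|=\sqrt{N}$, such as a Zadoff--Chu/CAZAC sequence, achieves it --- so the bound $\epsilon=N_0$ is actually attained and the proposed $\mathbf{X}_p$ is feasible for the least-squares estimator in (\ref{g_eq_hat}).
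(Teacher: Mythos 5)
Your proof is correct and follows essentially the same route as the paper's Appendix A: both exploit the DFT diagonalization of the circulant pilot matrix to write $\epsilon=N_0\sum_k 1/d_p(k)$ with the eigenvalues constrained by $\mathrm{Tr}\{\mathbf{X}_p^H\mathbf{X}_p\}=N^2$, and conclude that equal eigenvalues $d_p(k)=N$ (i.e., $\mathbf{X}_p^H\mathbf{X}_p=N\mathbf{I}_{N\times N}$) minimize the sum of reciprocals. The only differences are cosmetic: you diagonalize $\mathbf{X}_p$ itself rather than the Gram matrix, and you make the equality condition explicit via Cauchy--Schwarz/AM--HM where the paper simply asserts it, which is a slight improvement in rigor but not a different argument.
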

	
	\begin{proof}
		See Appendix A.
	\end{proof}

In particular, using Zadoff-Chu sequences \cite{ChuPolyphase} as $\mathbf{x}_p$ meets the requirement. For $N$ being an even number, we take $\mathbf{x}_p = [1,\exp({j\varpi \pi /N}),\exp( {j4\varpi \pi /N}), \ldots ,\exp({j\varpi \pi( {N - 1} )}^2/N)]^T$, where $\varpi$ is an arbitrary integer relatively prime to $N$. Obviously, in this case, the achieved MSE is given by $\epsilon = N_0$ and we have $\sigma_e^2 = N_0/ N$.
In Fig.~\ref{CE}, we verify the theoretical analysis of MSE by comparing it with simulation results, where $N=16,R=4,m_{\bar{r}}(l_{\bar{r}})=3$, and $L_{\bar{r}}=2$ for all $\bar{r}$ and $l_{\bar{r}}$ with an exponential-decay power-delay profile (PDP). It can be observed that the theoretical curve perfectly agrees with the simulation results for our proposed channel estimator.

\begin{figure}[t]
	\centering
	\includegraphics[width=4.0in]{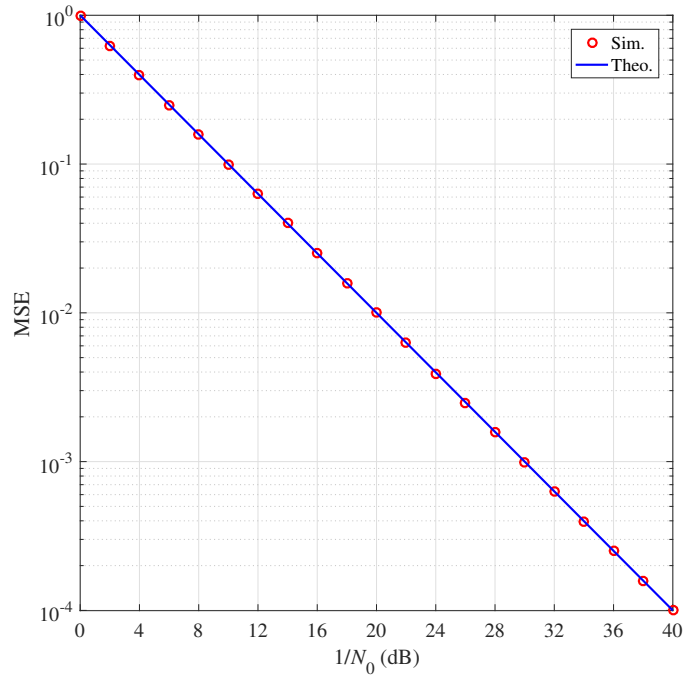}
	\caption{MSE versus $1/N_0$ for CPSC-RIS, where $N=16,R=4,m_{\bar{r}}(l_{\bar{r}})=3$, and $L_{\bar{r}}=2$ for all $\bar{r}$ and $l_{\bar{r}}$ with an exponentially decaying PDP.}
	\label{CE}
\end{figure}

As described above, the channel estimation of CPSC-RIS can be completed via a single block of pilots regardless of the value of $R$. This is much faster than existing frequency-domain channel estimation methods for RIS-aided broadband communications systems, such as the one in \cite{ZhengIntelligent} that requires $R+1$ transmission blocks to extract the CSI. In this sense, CPSC-RIS can be considered as a novel efficient framework of channel estimation in RIS-aided broadband communications. With CSI, various detectors can be developed to recover $\mathbf{x}$, which will be described in the following.
\subsubsection{ML Detector}
From (\ref{y_2}), the ML detector with the estimated CSI can be formulated as
	\begin{align}\label{ML}
		\hat{\mathbf{x} } =\arg \mathop {\min }\limits_{\bf{x}} {\left\| {{\bf{y}} -  { \mathrm{cir}(\bf{x})\hat{\mathbf{g}}_{eq}}} \right\|^2},
	\end{align}
	where $\hat{\mathbf{x} }=[\hat{x}(1),\ldots,\hat{x}(N)]^T$ is the estimate of $\mathbf{x}$. Since the ML detector makes a joint decision on the $N$ $M$-ary PSK symbols, the computational complexity in terms of complex multiplications is of order $\sim \mathcal{O}(M^N)$, which poses intolerable computational burden to the receiver. Thus, a low-complexity detector is highly recommended.

\subsubsection{ZF/MMSE Detectors}
Since $\mathbf{G}_{eq}$ is a circulant matrix, it has eigen decomposition $\mathbf{G}_{eq} = \mathbf{F}^H \mathbf{\Lambda }  \mathbf{F}$, where $\mathbf{F}$ is the unitary discrete Fourier transform (DFT) matrix with $\mathbf{F}^H\mathbf{F}={\bf{I}}_{N \times N}$, and $\mathbf{\Lambda }$ is the diagonal matrix whose diagonal elements are the $N$-point (non-unitary) DFT of $\mathbf{g}_{eq}$, i.e.,
\begin{align}\label{Lambda}
	\lambda (k) = \sum\limits_{n = 1}^N {{g_{eq}}\left( n \right)\exp \left( { - j\frac{{2\pi \left( {n - 1} \right)\left( {k - 1} \right)}}{N}} \right)},\quad k=1,\ldots,N.
\end{align}
Therefore, based on (\ref{y_3}), the frequency-domain received signal can be written as
\begin{align}\label{y_F}
	\mathbf{y}_F =\left[y_F(1),\ldots,y_F(N)\right]^T= \mathbf{F}\mathbf{y} = \mathbf{\Lambda } \mathbf{x}_F + \mathbf{w}_F,
\end{align}
where $\mathbf{x}_F$ and $\mathbf{w}_F$ are the frequency-domain counterparts of $\mathbf{x}$ and $\mathbf{w}$, respectively. For frequency-domain ZF/MMSE equalization, $\mathbf{y}_F$ is fed into $N$ single-tap equalizers in parallel, each of which is simply realized by one complex-valued multiplication, i.e.,
\begin{align}\label{x_F}
	\hat{x}_F(n) &= \phi(n) y_F(n), \quad n=1,\ldots,N
\end{align}
where $\hat{x}_F(n)$ is the output of the $n$-th equalizer, and $\phi(n)=\lambda (n)^*/(|\lambda (n)|^2 + cN_0)$ with $c = 1$ for MMSE equalizer and $c = 0$ for ZF equalizer.
Finally, we arrive at the estimated time-domain symbol vector via
\begin{align}\label{x_T}
	\hat{\mathbf{x}} = \mathbf{F}^H \hat{\mathbf{x}}_F,
\end{align}
where $\hat{\mathbf{x}}_F = [\hat{x}_F(1),\ldots,\hat{x}_F(N)]^T$. 

Obviously, by using single-tap equalization and existing efficient algorithms with the implementation of the DFT, the ZF/MMSE detectors achieve much lower computational complexity than the ML detector of (\ref{ML}). As a compromise, the ZF/MMSE detectors perform worse than the ML detector in terms of BER performance. Since CPSC-RIS can be regarded as a conventional CPSC scheme with enhanced CIR, the behaviors of ZF, ML, and MMSE detectors for CPSC-RIS are similar to those for conventional CPSC schemes. Particularly, the ML and MMSE detectors are able to harvest some diversity gains, while the ZF detector cannot extract any multipath diversity since the equalization severely amplifies the noise at frequencies in deep fades \cite{DevillersAbout}.

\section{Performance Analysis}
In this section, we concentrate on the performance analysis of the ML detector for CPSC-RIS. An upper bound on the BER of CPSC-RIS in the absence of channel estimation errors is provided after deriving the pairwise error probability (PEP) in closed-form.

Let us first study the conditional PEP with channel estimation errors, namely $\Pr(\mathbf{X} \to \hat{\mathbf{X}} | \hat{\mathbf{g}}_{eq})$, which is the probability of detecting $\mathbf{X}$ as $\hat{\mathbf{X}}$ conditioned on $\hat{\mathbf{g}}_{eq}$. From (\ref{y_2}), (\ref{y_3}), and (\ref{ML}), we have
\begin{align}\label{CPEP}
	\Pr\left(\mathbf{X} \to \hat{\mathbf{X}} | \hat{\mathbf{g}}_{eq}\right) = \Pr\left(\left\|\mathbf{y} -  \mathbf{X}\hat{\mathbf{g}}_{eq}  \right\|^2 > \left\|\mathbf{y} -  \hat{\mathbf{X}}\hat{\mathbf{g}}_{eq}  \right\|^2 \right).
\end{align}
Based on (\ref{g_eq_hat}), $\mathbf{y}$ can be expressed as
\begin{align}\label{y_4}
	\mathbf{y} = \mathbf{X}\left( \hat{\mathbf{g}}_{eq} - \mathbf{g}_e \right) + \mathbf{w} =\mathbf{X} \hat{\mathbf{g}}_{eq} + \bar{\mathbf{w}},
\end{align}
where $\bar{\mathbf{w}} = -\mathbf{X} \mathbf{g}_e + \mathbf{w}$. After putting (\ref{y_4}) into (\ref{CPEP}), we are led to
\begin{align}\label{CPEP_1}
	\Pr\left(\mathbf{X} \to \hat{\mathbf{X}} | \hat{\mathbf{g}}_{eq}\right) &= \Pr\left( - \left\| \left(\mathbf{X}- \hat{\mathbf{X}}\right)\hat{\mathbf{g}}_{eq} \right\|^2 - 2\Re\left\{\bar{\mathbf{w}}^H\left(\mathbf{X}- \hat{\mathbf{X}}\right)\hat{\mathbf{g}}_{eq}\right\} > 0\right)  \nonumber \\
	& = \Pr\left(V>0\right).
\end{align}
We observe that $V$ is Gaussian distributed with
\begin{align}
	E\left\lbrace V \right\rbrace &= -\left\| \left(\mathbf{X}- \hat{\mathbf{X}}\right)\hat{\mathbf{g}}_{eq}  \right\|^2, \nonumber \\
	Var\left\lbrace V \right\rbrace &= 2 N_0\left\| \left(\mathbf{X}- \hat{\mathbf{X}}\right)\hat{\mathbf{g}}_{eq}   \right\|^2 +
	2 \sigma_e^2 \left\| \mathbf{X}^H \left(\mathbf{X}- \hat{\mathbf{X}}\right)\hat{\mathbf{g}}_{eq}   \right\|^2. \nonumber
\end{align}
Therefore, (\ref{CPEP_1}) can be calculated as
\begin{align}\label{CPEP_2}
	\Pr\left(\mathbf{X} \to \hat{\mathbf{X}} | \hat{\mathbf{g}}_{eq}\right) = Q\left( \frac{ \left\| \left(\mathbf{X}- \hat{\mathbf{X}}\right)\hat{\mathbf{g}}_{eq}  \right\|^2}{\sqrt{2N_0 \left\| \left(\mathbf{X}- \hat{\mathbf{X}}\right)\hat{\mathbf{g}}_{eq}   \right\|^2 + 2\sigma_e^2 \left\| \mathbf{X}^H \left(\mathbf{X}- \hat{\mathbf{X}}\right)\hat{\mathbf{g}}_{eq}   \right\|^2}}\right).
\end{align}
In order to obtain the unconditional PEP, (\ref{CPEP_2}) should be further averaged over $\hat{\mathbf{g}}_{eq}$. However, it is very difficult to derive a closed-form expression. Here, we consider the case of perfect channel estimation, i.e., $\sigma_e^2=0$.

In the absence of channel estimation errors, (\ref{CPEP_2}) reduces to
\begin{align}\label{CPEP_3}
	\Pr\left(\mathbf{X} \to \hat{\mathbf{X}} |\mathbf{g}_{eq}\right) = Q\left( \sqrt{\frac{\left\| \left(\mathbf{X}- \hat{\mathbf{X}}\right)\mathbf{g}_{eq}  \right\|^2 }{2N_0}}\right).
\end{align}
The squared norm $\|(\mathbf{X}- \hat{\mathbf{X}})\mathbf{g}_{eq}\|^2$ can equivalently be developed as
\begin{align}\label{squared_norm}
	\left\| (\mathbf{X}- \hat{\mathbf{X}})_{[L_s]}\mathbf{g}_{eq}'  \right\|^2 &= \mathbf{g}_{eq}'^H \mathbf{A}\mathbf{g}_{eq}' \nonumber \\
	&=\mathbf{g}_{eq}'^H \mathbf{U}^H \mathbf{D}  \mathbf{U} \mathbf{g}_{eq}' \nonumber \\
	&=\sum\limits_{l = 1}^{L_s}d(l)|\tilde{g}_{eq}(l)|^2,
\end{align}
where $(\mathbf{X}- \hat{\mathbf{X}})_{[L_s]}$ comprises $L_s$ columns of $(\mathbf{X}- \hat{\mathbf{X}})$ corresponding to $L_s$ non-zero entries of $\mathbf{g}_{eq}$, $\mathbf{A}=(\mathbf{X}- \hat{\mathbf{X}})_{[L_s]}^H(\mathbf{X}- \hat{\mathbf{X}})_{[L_s]}$ is decomposed as $\mathbf{A}= \mathbf{U}^H\mathbf{D}\mathbf{U}$ with $\mathbf{U}$ being a unitary matrix and $\mathbf{D} = \mathrm{diag}\{d(1),\ldots,d(L_s)\}$, and $\tilde{\mathbf{g}}_{eq}=[\tilde{g}_{eq}(1),\ldots,\tilde{g}_{eq}(L_s)]^T = \mathbf{U}\mathbf{g}_{eq}'$. Hence, (\ref{CPEP_3}) can be rewritten as
\begin{align}\label{CPEP_4}
	\Pr\left(\mathbf{X} \to \hat{\mathbf{X}} |\mathbf{g}_{eq}\right) = Q\left( \sqrt{\frac{ \sum\limits_{l = 1}^{L_s}d(l)|\tilde{g}_{eq}(l)|^2 }{2N_0}}\right),
\end{align}
where $\tilde{g}_{eq}(l)=\mathbf{u}_l^T\mathbf{g}_{eq}'$ with $\mathbf{u}_l=[u_{l1},\ldots,u_{lL_s}]^T$ and $u_{ll'}$ being the $(l,l')$-th element of $\mathbf{U}$ for $l,l'=1,\ldots,L_s$. Next, the approximations in \cite{MallikA} are used to derive the moment generating function (MGF) of $|\tilde{g}_{eq}(l)|^2$. Specifically, the real and imaginary parts of a complex Nakagami-$m$ fading gain can be approximated as $\mathcal{N}(\mu_X,\Omega_s/2)$ and $\mathcal{N}(\mu_Y,\Omega_s/2)$ distributions, respectively, where $\mu_X=\sqrt{\sqrt{1-1/m}}\sqrt{\Omega}\cos\phi$, $\mu_Y=\sqrt{\sqrt{1-1/m}}\sqrt{\Omega}\sin\phi$, and $\Omega_s=\Omega(1-\sqrt{1-1/m})$ with $m$ being the fading parameter, $\Omega$ being the spreading parameter, and $\phi$ being the angle parameter of the complex Nakagami-$m$ distribution. Hence, by defining $\mu_X(l)$, $\mu_Y(l)$, and 
	$\Omega_s(l)$ as the corresponding parameters that are associated with $g_{eq}'(l)$, we have $\Re\{\tilde{g}_{eq}(l)\} \sim \mathcal{N}(\Re\{\mathbf{u}_l^T\bm{\mu}\}, \mathbf{u}_{2l}^T\bm{\Omega}_s/2)$ and $\Im\{\tilde{g}_{eq}(l)\} \sim \mathcal{N}(\Im\{\mathbf{u}_l^T\bm{\mu}\}, \mathbf{u}_{2l}^T\bm{\Omega}_s/2)$, where $\bm{\mu}=[\mu_X(1)+j\mu_Y(1),\ldots,\mu_X(L_s)+j\mu_Y(L_s)]^T$, $\mathbf{u}_{2l}=[|u_{l1}|^2,\ldots,|u_{lL_s}|^2]^T$, and $\bm{\Omega}_s=[\Omega_s(1),\ldots,\Omega_s(L_s)]^T$. Hence, the MGF of $|\tilde{g}_{eq}(l)|^2$ can be given by
	\begin{align}\label{MGF}
		\mathcal{M}_l(t)=\frac{1}{1-t\mathbf{u}_{2l}^T\bm{\Omega}_s}\exp\left(\frac{t|\mathbf{u}_l^T\bm{\mu}|^2}{1-t\mathbf{u}_{2l}^T\bm{\Omega}_s}\right).
	\end{align}
	It can be shown that $|\mathbf{u}_l^T\bm{\mu}|^2$ in (\ref{MGF}) is irrelevant to the value of $\phi$. By using the well-known approximation $Q(x)\approx 1/12 \cdot e^{-x^2/2}+ 1/4 \cdot e^{-2x^2/3}$ \cite{ChianiNew}, the unconditional PEP can be derived as
	\begin{align}\label{UPEP}
		\Pr\left(\mathbf{X} \to \hat{\mathbf{X}}\right) &\approx E\left\lbrace \frac{1}{12}\prod_{l = 1}^{L_s}\exp\left(-\frac{ d(l)|\tilde{g}_{eq}(l)|^2}{4N_0}\right) + \frac{1}{4}\prod_{l = 1}^{L_s}\exp\left(-\frac{ d(l)|\tilde{g}_{eq}(l)|^2}{3N_0}\right) \right\rbrace  \nonumber \\
		&= \frac{1}{12}\prod_{l = 1}^{L_s} \mathcal{M}_l\left(-\frac{ d(l)}{4N_0}\right)
		+ \frac{1}{4}\prod_{l = 1}^{L_s} \mathcal{M}_l\left(-\frac{ d(l)}{3N_0}\right).
\end{align}
Finally, according to the union bounding technique, the BER of CPSC-RIS can be upper bounded by
\begin{align}\label{Upper_Bound}
	P_{e} \leq \frac{1}{{b{2^b}}}\sum\limits_{\bf{X}} {\sum\limits_{\hat{\mathbf{ X}}} {\Pr\left( {{\bf{X}} \to \hat{\mathbf{ X}}} \right)\xi \left( {{\bf{X}},\hat{\mathbf{ X}}} \right)} },
\end{align}
where $\xi( {{\bf{X}},\hat{\mathbf{ X}}})$ is the number of erroneous bits when $\mathbf{X}$ is detected as $\hat{\mathbf{X}}$.

\textit{Remark 1:} The contribution of $L_s$ independent modified channel taps of $\tilde{\mathbf{g}}_{eq}$ to the unconditional PEP in (\ref{UPEP}) suggests that the highest possible diversity order is equal to $L_s$. However, this maximum diversity order is achieved if and only if $d(l)$ are non-null for all $l$, $\mathbf{X}$, and $\hat{\mathbf{X}}$. In general, the diversity order achieved by CPSC-RIS is given by $\min \mathrm{rank}(\mathbf{A})$. Since there exist some error events, such as
\begin{align}
	\mathbf{X}=
	\begin{bmatrix}
		1 &1  &\cdots   & 1 \\
		1 &1   &\cdots   & 1 \\
		\vdots & \vdots   &\cdots   & \vdots  \\
		1&1   &\cdots   & 1 \\
		1&1  &\cdots   & 1\\	
	\end{bmatrix}, \quad
	\hat{\mathbf{X}} =
	\begin{bmatrix}
		-1 &-1  &\cdots   & -1 \\
		-1 &-1   &\cdots   & -1 \\
		\vdots & \vdots   &\cdots   & \vdots  \\
		-1&-1   &\cdots   & -1 \\
		-1&-1  &\cdots   & -1\\	
	\end{bmatrix}, \nonumber
\end{align}
which make $\mathrm{rank}(\mathbf{A})=1$, the asymptotical (i.e., at infinite SNR) diversity order of CPSC-RIS is equal to one. Since this type of error events rarely occurs, simulation results in Section V will show that the multipath diversity can be extracted by CPSC-RIS for moderate BER values and reasonably large values of $N$.

On the other hand, compared with conventional CPSC systems without RIS and CDD, CPSC-RIS involves extra $R$ RIS's reflecting links with $\sum_{{\bar{r}}=1}^{R}L_{\bar{r}}$ channel taps that are collected through CDD. Therefore, from (\ref{UPEP}) with $L_s >> L_0$, the BER performance of CPSC-RIS is expected to be much better than that of conventional CPSC systems without RIS and CDD. 

\section{IM-Empowered CPSC-RIS System}
In this section, we extend CPSC-RIS to CPSC-RIS-IM, which improves the SE of CPSC-RIS by employing the concept of CDD-IM \cite{WenCyclic}. In CPSC-RIS-IM, the information bits are modulated into the cyclic delays at the RIS in addition to conventional constellation information of $N$ PSK symbols. Specifically, a total of $b$ information bits are divided into two parts. The first part, consisting of $b_1=\lfloor \log_2(R!)\rfloor$ bits, is used for permuting the cyclic delays at the RIS. Unlike the CPSC-RIS scheme in which the cyclic delay for the $r$-th reflecting group is fixed to $\Delta_r=r\Delta$, the cyclic delay for the $r$-th reflecting group in CPSC-RIS-IM is given by
\begin{align}\label{Delta}
	\Delta_r=k_r\Delta,
\end{align}
where $r=1,\ldots, R$ and $\mathbf{k}=[k_1,k_2,\ldots,k_R]^T$ is a full permutation of $\{1, 2,\ldots,R\}$ determined by the $b_1$ bits. The mapping from $b_{1}$ bits to $\mathbf{k}$ can be realized by either a look-up table or the permutation method \cite{WenMultiple}. Table \ref{Table_1} presents an example of a mapping table between the $b_{1}$ bits and $\mathbf{k}$ for $R=3$ and $b_1=2$, where the last two permutations are unused.

\begin{table}[!t]
	\caption{An example of mapping table between $p_{1}$ bits and $\mathbf{k}$ for $R=3$.}
	\label{Table_1}
	\centering
	\begin{tabular}{|c||c|c|c|c|c|c|}
		\hline
		$b_1$ bits  & [0 0] & [0 1] & [1 0] & [1 1] & -- & --\\
		\hline
		\hline
		$\mathbf{k}$  & $[1,2,3]^T$ &$[2,1,3]^T$ &$[1,3,2]^T$ & $[2,3,1]^T$ &$[3,2,1]^T$ &$[3,1,2]^T$\\
		\hline	
	\end{tabular}
\end{table}

The second part, comprised of $b_2= N\log_2(M)$ bits, is mapped to $N$ PSK symbols $\mathbf{x} = [x(1),\ldots,x(N)]^T$ via the regular $M$-PSK constellation $\mathcal{X}$ that is also used in CPSC-RIS. Therefore, the SE of CPSC-RIS-IM is given by
	\begin{align}\label{SE3}
		\textsf{F}_{\text{CPSC-RIS-IM}}=\frac{N\log_2(M)+\lfloor\log_2(R!)\rfloor}{N+L} \quad \text{bps/Hz},
	\end{align}
	which shows that CPSC-RIS-IM can transmit $\lfloor\log_2(R!)\rfloor$ more bits per block transmission than CPSC-RIS. Obviously, increasing the value of $R$ leads to higher SE. However, a larger value of $R$ also incurs higher complexity of implementing the mapping between the $b_{1}$ bits and $\mathbf{k}$ at the transceiver, cyclic delays at the RIS, and signal detection at the receiver.
\begin{figure}[t]
	\centering
	\includegraphics[width=5.5in]{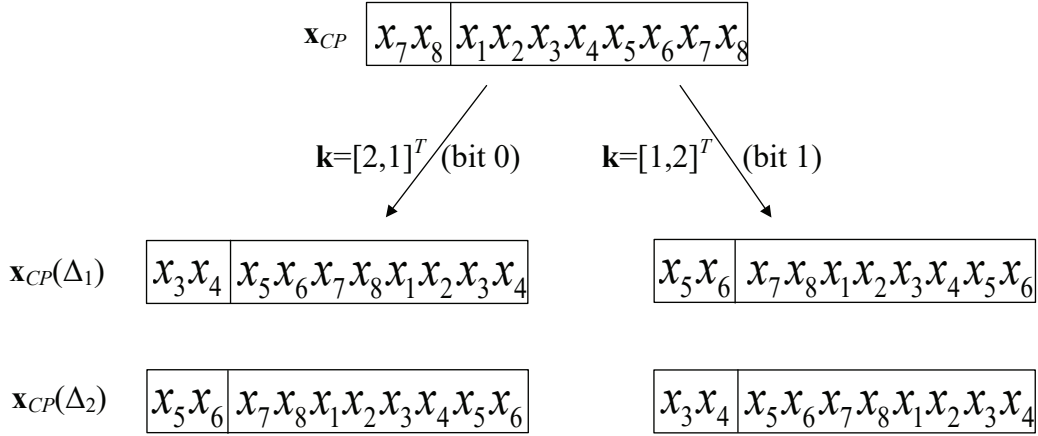}
	\caption{Examples of $\mathbf{x}_{CP} $ and $\mathbf{x}_{CP}(\Delta_r),r= 1,\ldots,R$ for CPSC-RIS-IM with $\mathbf{k}=[2,1]^T$ and $\mathbf{k}=[1,2]^T$, where $N=8, R=2$, and $\Delta=2$.}
	\label{CDD_IM}
\end{figure}

In Fig.~\ref{CDD_IM}, we present an example of $\mathbf{x}_{CP} $ and $\mathbf{x}_{CP}(\Delta_r),r= 1,\ldots,R$ for CPSC-RIS-IM, where $N=8, R=2, \Delta=2$, and both two possible realizations of $\mathbf{k}$ are considered. It should be noted that there is ambiguity at the receiver to detect $\mathbf{x}$ and $\mathbf{k}$. We exemplify this point with Fig.~\ref{CDD_IM} and letting $x(1) = x(2)=\cdots =x(8)$. Obviously, in this case, both $\mathbf{x}_{CP}(\Delta_1)$ and $\mathbf{x}_{CP}(\Delta_2)$ for $\mathbf{k}=[2,1]^T$ are completely the same as those for $\mathbf{k}=[1,2]^T$. Hence, we take the first symbol of $\mathbf{x}$, namely $x(1)$, as an anchor point to avoid ambiguity in signal detection. Specifically, instead of $\mathbf{x}$, we transmit $\tilde{\mathbf{x}}=\mathbf{x} \odot [\exp(j\pi/M),\mathbf{1}_{1 \times  (N-1)}]^T$ at the CPSC transmitter. In this manner, the Euclidean distance between $x(1)$ and any other symbols in $\tilde{\mathbf{x}}$ is maximized.

Similar to (\ref{y_2}), the received signal for CPSC-RIS-IM can be expressed as
\begin{align}\label{y_IM2}
	\mathbf{y} = \tilde{\mathbf{X}}\mathbf{g}_{eq,\mathbf{k}} + \mathbf{w},
\end{align}
where the notations are defined in (\ref{y_2}) except that $\tilde{\mathbf{X}} = \mathrm{cir}(\tilde{\mathbf{x}})$ and $\mathbf{g}_{eq,\mathbf{k}}$ is revised as
\begin{align}
	\mathbf{g}_{eq,\mathbf{k}}=\sum_{{\bar{r}}=0}^{R}\mathbf{g}_{\bar{r}}^0(k_r\Delta),
\end{align}
where $k_0=0$. Based on (\ref{y_IM2}), ML and low-complexity detectors can be designed for CPSC-RIS-IM, which are discussed in the following.

\subsubsection{ML Detector}
The optimal ML detector makes a joint decision on $\mathbf{k}$ and $\mathbf{x}$ by searching all possible combinations
of them, namely
\begin{align}\label{ML_IM2}
	\left( \hat{\mathbf{x} }, \hat{\mathbf{k} } \right) =\arg \mathop {\min }\limits_{\mathbf{x},\mathbf{k}} {\left\| {{\bf{y}} -  \tilde{\mathbf{X}}{\mathbf{g}_{eq,\mathbf{k}}}} \right\|^2},
\end{align}
where $\hat{\mathbf{x} }$ and $\hat{\mathbf{k}}$ are the estimates of $\mathbf{x}$ and $\mathbf{k}$, respectively.
Then, the corresponding $b$ information bits can be readily recovered from $\hat{\mathbf{x} }$ and $\hat{\mathbf{k}}$. Obviously, the ML detector should search all $R!M^N$ possible combinations of $\mathbf{x}$ and $\mathbf{k}$ to find out the one that minimizes the ML metric, which results in the computational complexity in terms of complex multiplications of order $\sim \mathcal{O}(R!M^N)$ per block transmission. To lower the computational complexity, we next develop a low-complexity detector.

\subsubsection{Low-Complexity Detector}
For a realization of $\mathbf{k}$, denoted by $\mathbf{k}_{i},i=1,\ldots,2^{b_1}$, we can employ the ZF/MMSE detectors similar to (\ref{y_F})-(\ref{x_T}) for obtaining the estimate of $\tilde{\mathbf{x}}$, denoted by $\hat{\tilde{\mathbf{x}}}_{\mathbf{k}_{i}}$. Further, given $\mathbf{k}_{i}$, $\mathbf{x}$ is estimated as $\hat{\mathbf{x}}_{\mathbf{k}_{i}}=[\hat{x}_{\mathbf{k}_{i}}(1),\ldots,\hat{x}_{\mathbf{k}_{i}}(N)]^T=\hat{\tilde{\mathbf{x}}}_{\mathbf{k}_{i}} \odot [\exp(-j\pi/M),\mathbf{1}_{1 \times  (N-1)}]^T$. Then, each symbol in $\hat{\mathbf{x}}_{\mathbf{k}_{i}}$ is demodulated independently via $M$-PSK demodulation, namely 
\begin{align}\label{ML_IM2_s}
	\hat{s}_{\mathbf{k}_{i}}(n) =\arg \mathop {\min }\limits_{s \in \mathcal{X}} {\left\|\hat{x}_{\mathbf{k}_{i}}(n) - s \right\|^2}, \quad n =1,\ldots,N
\end{align}
where $\hat{s}_{\mathbf{k}_{i}}(n)$ represents the hard decision on $\hat{x}_{\mathbf{k}_{i}}(n)$. Based on $\hat{\mathbf{s}}_{\mathbf{k}_{i}} = [\hat{s}_{\mathbf{k}_{i}}(1),\ldots,\hat{s}_{\mathbf{k}_{i}}(N)]^T$, the metric that $\mathbf{k}_{i}$ is used at the transmitter can be given by
\begin{align}\label{metric}
	T(\mathbf{k}_{i}) = {\left\| {{\bf{y}} -  \hat{\tilde{\mathbf{S}}}_{\mathbf{k}_{i}}{\mathbf{g}_{eq,\mathbf{k}_{i}}}} \right\|^2},
\end{align}
where $\hat{\tilde{\mathbf{S}}}_{\mathbf{k}_{i}} = \mathrm{cir}(\hat{\tilde{\mathbf{s}}}_{\mathbf{k}_{i}})$ with $\hat{\tilde{\mathbf{s}}}_{\mathbf{k}_{i}} = \hat{\mathbf{s}}_{\mathbf{k}_{i}} \odot [\exp(j\pi/M),\mathbf{1}_{1 \times  (N-1)}]^T$. Finally, the estimates of $\mathbf{k}$ and $\mathbf{x}$ can be expressed as
\begin{align}
	\hat{\mathbf{k}} = \min\limits_{\mathbf{k}_i} T(\mathbf{k}_{i}),
\end{align}
and $\hat{\mathbf{x}} = \hat{\mathbf{s}}_{\hat{\mathbf{k}}}$, respectively.

\subsubsection{Performance Analysis}
	Here, we analyze the BER performance of ML detection for CPSC-RIS-IM. Let us first define a permutation matrix $\mathbf{P}_{\mathbf{k}}$, which is made up of $N$ columns of $\mathbf{I}_{N\times N}$ and satisfies $\mathbf{P}_{\mathbf{k}}\mathbf{g}_{eq}=\mathbf{g}_{eq,\mathbf{k}}$. Then, (\ref{y_IM2}) can be rewritten as
	\begin{align}
		\mathbf{y} = \tilde{\mathbf{X}}\mathbf{P}_{\mathbf{k}}\mathbf{g}_{eq} + \mathbf{w}.
	\end{align}
	
	In the case of perfect channel estimation, similar to (\ref{CPEP_3}), the conditional PEP can be written as 
	\begin{align}\label{CPEP_IM}
		\Pr\left((\tilde{\mathbf{X}},{\mathbf{k}}) \to (\hat{\tilde{\mathbf{X}}},{\hat{{\mathbf{k}}}}) |{\mathbf{g}}_{eq}\right) &= \Pr\left(\left\|\mathbf{y} -  \tilde{\mathbf{X}}\mathbf{P}_{\mathbf{k}}{\mathbf{g}}_{eq}  \right\|^2 > \left\|\mathbf{y} -  \hat{\tilde{\mathbf{X}}}\mathbf{P}_{\hat{{\mathbf{k}}}}{\mathbf{g}}_{eq}  \right\|^2 \right) \nonumber \\
		&=Q\left( \sqrt{\frac{\left\| \left(\tilde{\mathbf{X}}\mathbf{P}_{\mathbf{k}}- \hat{\tilde{\mathbf{X}}}\mathbf{P}_{\hat{{\mathbf{k}}}}\right)\mathbf{g}_{eq}  \right\|^2 }{2N_0}}\right).
	\end{align}
	The unconditional PEP, namely $\Pr((\tilde{\mathbf{X}},{\mathbf{k}}) \to (\hat{\tilde{\mathbf{X}}},{\hat{{\mathbf{k}}}}))$, can be obtained by following the methods in (\ref{squared_norm})-(\ref{UPEP}), which is omitted to avoid redundancy. Finally, an upper bound on the BER of CPSC-RIS-IM is given by 
	\begin{align}
		P_{e} \leq \frac{1}{{b{2^b}}}\sum\limits_{\tilde{\mathbf{X}},{\mathbf{k}}} {\sum\limits_{\hat{\tilde{\mathbf{X}}},\hat{{\mathbf{k}}}}} {\Pr\left((\tilde{\mathbf{X}},{\mathbf{k}}) \to (\hat{\tilde{\mathbf{X}}},{\hat{{\mathbf{k}}}})\right)\xi \left( (\tilde{\mathbf{X}},{\mathbf{k}}),(\hat{\tilde{\mathbf{X}}},\hat{{\mathbf{k}}}) \right)},
	\end{align}
	where $\xi ( (\tilde{\mathbf{X}},{\mathbf{k}}),(\hat{\tilde{\mathbf{X}}},\hat{{\mathbf{k}}}))$ is the number of erroneous bits when $(\tilde{\mathbf{X}},{\mathbf{k}})$ is detected as $(\hat{\tilde{\mathbf{X}}},\hat{{\mathbf{k}}})$.

\textit{Remark 2:} In CPSC-RIS and CPSC-RIS-IM, the amplitude coefficients of all reflecting groups are fixed to 1, i.e., $a_r=1$ for $r=1,\ldots,R$. Actually, a reflection-type power amplifier can be deployed for each RIS element, such that $a_r$ can be drawn from a discrete set, say $\{a^{1},\ldots,a^{t}\}$ with all entries equal to or greater than 1 \cite{ZhangActive}. In this manner, the SE of CPSC-RIS and CPSC-RIS-IM can be further improved by encoding partial information (up to $R\log_2(t)$ bits) into the amplitude coefficients, and a multi-ring PSK constellation is observed at the receiver. Obviously, there is a trade-off between the SE and error performance.

\section{Simulation Results}
In this section, we conduct Monte Carlo simulations to evaluate the uncoded BER performance of CPSC-RIS(-IM) by taking CPSC and OFDM-RIS \cite{ZhengIntelligent} as benchmarks. In OFDM-RIS, the RIS is configured to maximize the channel gain according to the method in \cite{ZhengIntelligent} by assuming perfect CSI. In all simulations, we plot the BER versus SNR = $E_b/N_0$. We use $N_G=8$, while increasing it is expected to improve the performance for a given value of $R$ since the signal power reflected from each group is enhanced. The distances between the transmitter and the receiver, between the transmitter and the RIS, and between the RIS and the receiver are $D_0=50$ m, $D_1=5$ m, and $D_2=50$ m, respectively. The large-scale path loss of the transmitter-to-receiver, transmitter-to-RIS, and RIS-to-receiver links are given by $D_0^{-2.5}$, $D_1^{-2}$, and $D_2^{-2}$, respectively. All wireless channels are modeled by the exponentially	decaying PDP with  the decaying factor of unity, where each tap is generated according to the Nakagami-$m$ distribution. For simplicity, we assume that $L_0=\cdots = L_R=2$, $m_0(1) =\cdots =m_0(L_0)=m_1(1) =\cdots = m_1(L_1)=\cdots=m_R(L_R)=2$, and $\Delta=L$. Each BER point is obtained by averaging over at least $10^5$ channel realizations.

\begin{figure}[t]
	\centering
	\includegraphics[width=4.5in]{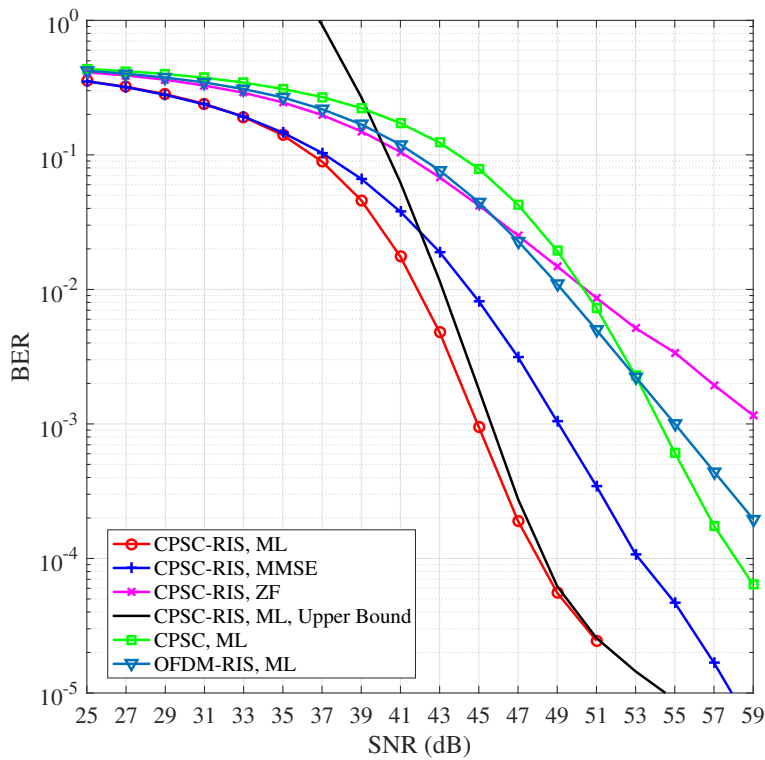}
	\caption{Performance comparison among CPSC-RIS, CPSC, and OFDM-RIS, where $N=8,M=2,R=2$, and perfect CSI is assumed at the receiver.}
	\label{Fig1}
\end{figure}

\begin{figure}[t]
\centering
\includegraphics[width=4.5in]{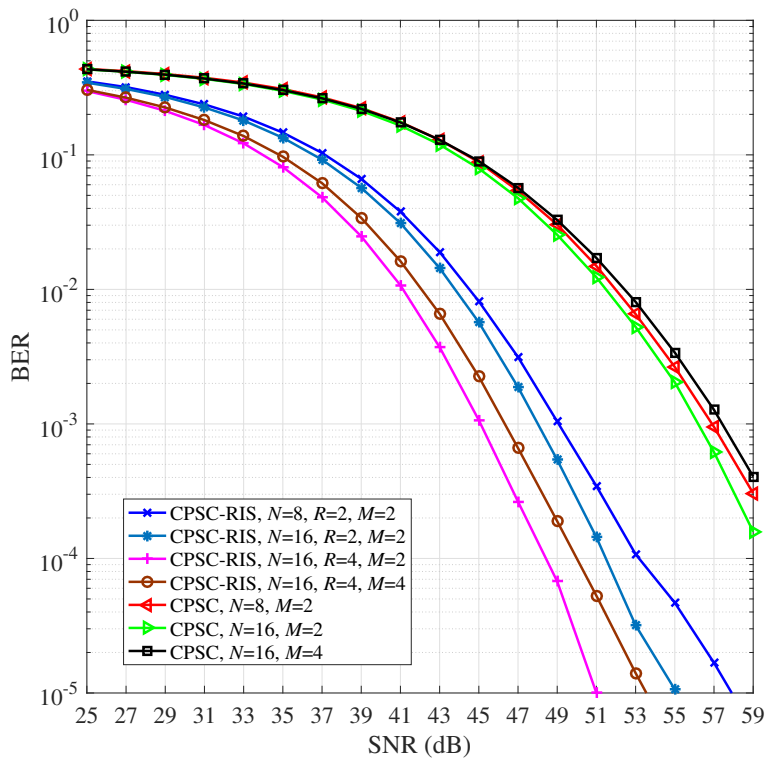}
\caption{Performance comparison between CPSC-RIS and CPSC, where $N=8,16,M=2,4,R=2,4$, and perfect CSI is assumed at the receiver.}
\label{Fig2}
\end{figure}
Fig.~\ref{Fig1} depicts the performance comparison among CPSC-RIS, CPSC, and OFDM-RIS, where $N=8,M=2,R=2$, and perfect CSI is assumed at the receiver. CPSC-RIS employs the ML, MMSE, and ZF detectors, while CPSC and OFDM-RIS use the ML detectors. To verify the analysis given in Section III, we also plot the BER upper bound (\ref{Upper_Bound}) as a theoretical bound for the ML detector of CPSC-RIS in Fig.~\ref{Fig1}. As seen from Fig.~\ref{Fig1}, for CPSC-RIS with the optimal ML detection, the BER upper bound agrees with the simulation results in the high SNR region. In particular, for CPSC-RIS, the ZF detector cannot harvest any diversity and performs the worst. In contrast, for CPSC-RIS, both the ML and MMSE detectors achieve some diversity gains for practical BER values, while they fails to extract diversity gains at infinite SNR. Moreover, CPSC-RIS even with low-complexity MMSE detector significantly outperforms conventional CPSC and OFDM-RIS with the optimal ML detectors throughout the considered SNR region. CPSC-RIS with the ML and MMSE detectors obtain approximately 10 dB and 5 dB SNR gains, respectively, over CPSC with the ML detector at a BER value of $10^{-4}$. Since the MMSE detector is able to achieve acceptable performance with low implementation complexity, the MMSE-based detectors are adopted for CPSC-RIS(-IM) and CPSC in the remaining simulations.

Fig.~\ref{Fig2} evaluates the BER performance of CPSC-RIS and CPSC, where $N=8,16,M=2,4,R=2,4$, and perfect CSI is assumed at the receiver. It can be observed from Fig.~\ref{Fig2} that all considered CPSC-RIS schemes perform better than all
CPSC schemes for all SNR values thanks to the extra $R$ reflecting links introduced by the RIS, and the gain is more prominent for a larger value of $R$. For $N=16$ and $M=2$, CPSC-RIS with $R=4$ obtains about 3 dB SNR gain over that with $R=2$, at a BER value of $10^{-4}$. Increasing the value of $N$ achieves SNR gains of 2 dB and 1 dB for CPSC-RIS and CPSC, respectively, at a BER value of $10^{-4}$. By contrast, a larger constellation size deteriorates the performance (about 2 dB SNR loss) for both CPSC-RIS and CPSC.

\begin{figure}[t]
	\centering
	\includegraphics[width=4.5in]{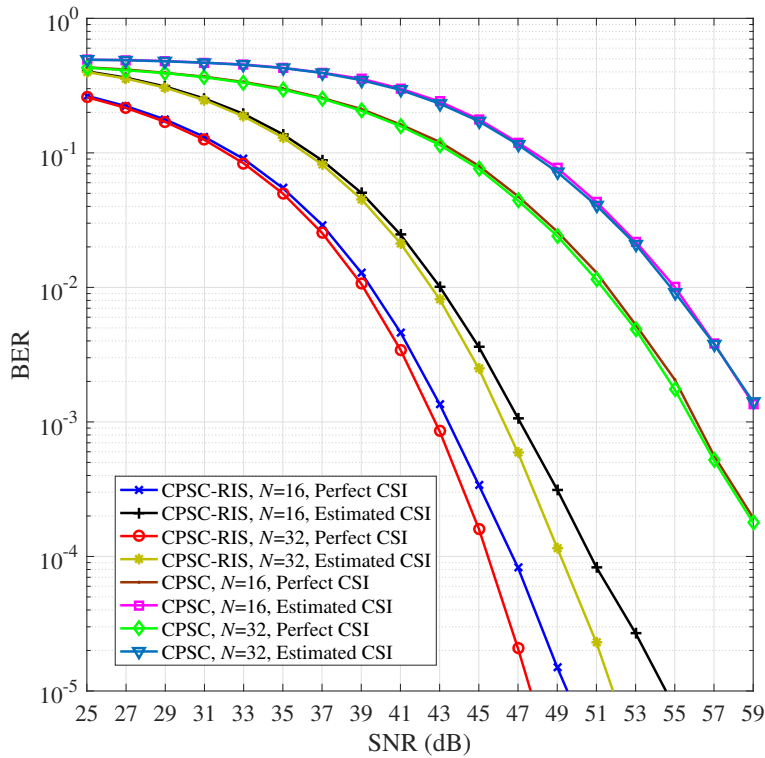}
	\caption{Performance comparison between CPSC-RIS and CPSC, where $N=16,32,M=2$, and $R=6$. The cases of perfect channel estimation and practical channel estimation in (\ref{g_eq_hat}) are considered.}
	\label{Fig3}
\end{figure}
Fig.~\ref{Fig3} presents the performance comparison between CPSC-RIS and CPSC with perfect and estimated CSI at the receiver, where $N=16,32,M=2$, and $R=6$. The estimated CSI is obtained via (\ref{g_eq_hat}). As expected, all schemes with estimated CSI perform worse than the corresponding schemes with perfect CSI, and approximately 4 dB SNR loss is incurred by the imperfect CSI. We observe from Fig.~\ref{Fig3} that, in both cases, increasing the value of $N$ from 16 to 32 leads to SNR gains for both CPSC-RIS and CPSC. Specifically, the SNR gain for CPSC-RIS is about 1 dB at a BER value of $10^{-4}$, while that for CPSC is minor. With both perfect and estimated CSI at the receiver, CPSC-RIS outperforms CPSC for all considered SNR values. Moreover, for $N=16$ and 32, CPSC-RIS with estimated CSI achieves about 9 dB and 10 dB SNR gains, respectively, over CPSC with perfect CSI, at a BER value of $10^{-3}$.

\begin{figure}[t]
	\centering
	\includegraphics[width=4.5in]{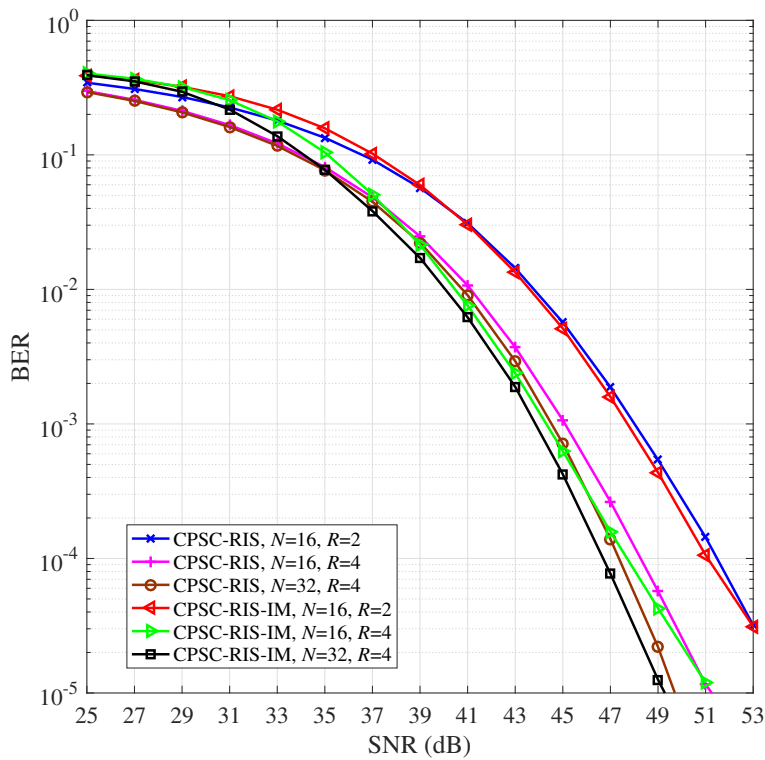}
	\caption{Performance comparison between CPSC-RIS and CPSC-RIS-IM, where $N=16,32,M=2,R=2,4$, and perfect CSI is assumed at the receiver.}
	\label{Fig4}
\end{figure}

In Fig.~\ref{Fig4}, we compare the BER performance of CPSC-RIS and CPSC-RIS-IM, where $N=16,32,M=2,R=2,4$, and perfect CSI is assumed at the receiver. As seen from Fig.~\ref{Fig4}, CPSC-RIS-IM outperforms CPSC-RIS at high SNR with the same parameter settings. Particularly, the performance gain achieved by CPSC-RIS-IM over CPSC-RIS for $R=4$ is larger than that for $R=2$. This can be explained as follows. The cyclic delay constraints in Section II are met for both $R=2$ and $R=4$. As seen from (\ref{g_eq}), increasing the value of $R$ from 2 to 4 enhances the equivalent CIR. Hence, the performance of both CPSC-RIS-IM and CPSC-RIS becomes better when $R$ increases from 2 to 4. Moreover, for CPSC-RIS-IM, a larger value of $R$ results in more transmitted IM bits, and the transmit power becomes higher for a given value of SNR=$E_b/N_0$. In addition, the transmission of IM bits itself does not consume power. Therefore, the performance improvement for CPSC-RIS-IM is more significant than that for CPSC-RIS when $R$ increases from 2 to 4. Also, increasing the value of $N$ enhances the performance of both CPSC-RIS and CPSC-RIS-IM. In particular, doubling the value of $R$ has a more significant impact than that of $N$ on both CPSC-RIS and CPSC-RIS-IM. Specifically, about 3 dB and 1 dB SNR gains can be achieved at a BER value of $10^{-4}$, by increasing the value of $R$ from 2 to 4 and the value of $N$ from 16 to 32, respectively.

\begin{figure}[t]
	\centering
	\includegraphics[width=4.5in]{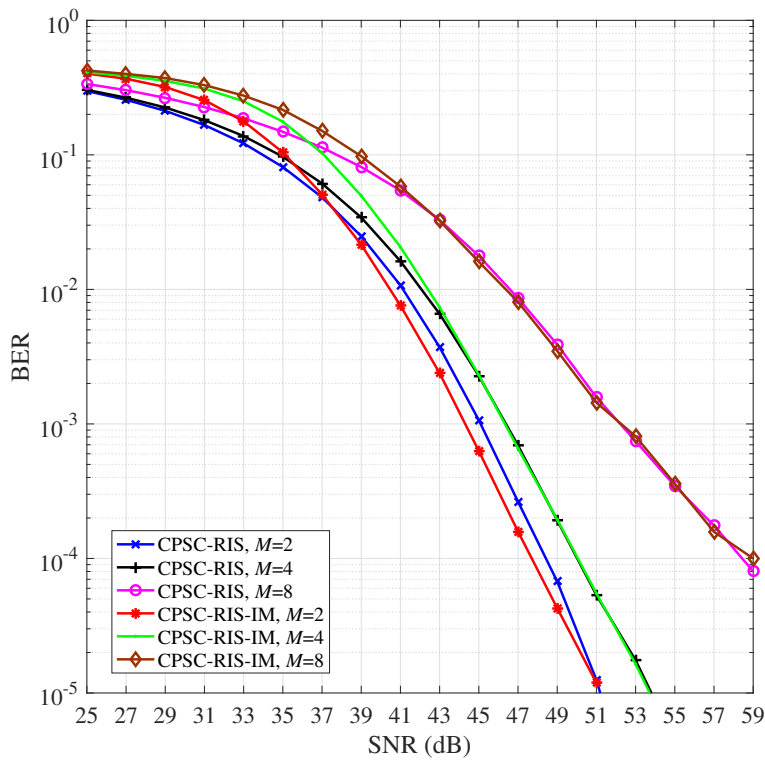}
	\caption{Performance comparison between CPSC-RIS and CPSC-RIS-IM, where $N=16,R=4,M=2,4,8$, and perfect CSI is assumed at the receiver.}
	\label{Fig5}
\end{figure}

Fig.~\ref{Fig5} illustrates the performance comparison between CPSC-RIS and CPSC-RIS-IM, where $N=16,R=4,M=2,4,8$, and perfect CSI is assumed at the receiver. As shown in Fig.~\ref{Fig5}, with $M=2$, CPSC-RIS-IM performs better than CPSC-RIS for practical BER values with SNR$>38$ dB. With increasing the value of $M$, the superiority of CPSC-RIS-IM over CPSC-RIS is lost. This is because that the proportion of the IM bits in CPSC-RIS-IM decreases with increasing the value of $M$. However, from Fig.~\ref{Fig2}, it is expected that CPSC-RIS-IM still achieves much better BER performance than conventional CPSC without RIS and CDD.

\section{Conclusion}
In this paper, we have proposed a CDD-enhancing CPSC transmission scheme for RIS-aided broadband wireless systems. The channel estimation, signal detection, and SE improvement have all been studied for CPSC-RIS. Specifically, a practical and efficient time-domain channel estimator has been developed for CPSC-RIS, which can be considered as a general channel estimator for RIS-aided broadband communications. The optimal ML and low-complexity ZF/MMSE detectors have been designed for CPSC-RIS, and the BER performance of the ML detector over frequency-selective Nakagami-$m$ fading channels has been analyzed with the theoretically derived upper bound. Further, CPSC-RIS has been extended to CPSC-RIS-IM for improving the SE by using the concept of IM. We conclude that the proposed two schemes can be considered as promising candidates for RIS-empowered broadband wireless communications.

	\begin{appendices}
		\section{Proof of Proposition 1}
		Since $\mathbf{X}_p$ is a circulant matrix, $\mathbf{X}_p^H\mathbf{X}_p$ is also a circulant matrix and can be expressed as $\mathbf{X}_p^H\mathbf{X}_p = \mathrm{cir}(\mathbf{x}_{pp})$, where $\mathbf{x}_{pp}=[x_{pp}(1), x_{pp}(2), \ldots, x_{pp}(N)]^T$ with $x_{pp}(n)=\mathbf{x}_p^H(n-1)\mathbf{x}_p$ and $\mathbf{x}_p(n-1)$ being the cyclically delayed version of $\mathbf{x}_p$ with the cyclic delay $n-1$ for $n=1,\ldots,N$. Obviously, we have $x_{pp}(1)=\mathbf{x}_p^H\mathbf{x}_p=N$.
		
		Further, as a circulant matrix, $\mathbf{X}_p^H\mathbf{X}_p$ can be decomposed as $\mathbf{X}_p^H\mathbf{X}_p = \mathbf{F}^{-1}\mathbf{D}\mathbf{F}$, where $\mathbf{F}$ is the unitary DFT matrix and $\mathbf{D}_p=\mathrm{diag}([d_p(1),\ldots, d_p(N)]^T)$ with 
		\begin{align}
			d_p\left( k \right) =\sum_{n=1}^N{x_{pp}\left( n \right) \exp \left( -j\frac{2\pi \left( n-1 \right) \left( k-1 \right)}{N} \right)}, \quad k=1,\dots , N,
		\end{align}
		and $\sum\nolimits_{k=1}^N{d_p\left( k \right)}=\mathrm{Tr}\left\{ \mathbf{X}_{p}^{H}\mathbf{X}_p \right\} =N^2$. From (\ref{MSE_Channel}), the MSE can be rewritten as
		\begin{align}\label{epsilon}
			\epsilon =N_0\mathrm{Tr}\left\{ \left( \mathbf{X}_{p}^{H}\mathbf{X}_p \right) ^{-1} \right\} =N_0\mathrm{Tr}\left\{ \mathbf{F}^{-1}\mathbf{D}_{p}^{-1}\mathbf{F} \right\} =N_0\mathrm{Tr}\left\{ \mathbf{D}_{p}^{-1} \right\} =N_0\sum_{k=1}^N{\frac{1}{d_p\left( k \right)}}.
		\end{align}
		From (\ref{epsilon}), under the condition that $\sum\nolimits_{k=1}^N{d_p\left( k \right)}=N^2$, to minimize the MSE, we should have $d_p(1)=\ldots=d_p(N)=N$. Finally, we conclude that $x_{pp}=[N, 0,\ldots, 0]^T$ and $
		\mathbf{X}_{p}^{H}\mathbf{X}_p=N\mathbf{I}_{N\times N}$ when the MSE is minimized, which completes the proof.
	\end{appendices}

\end{document}